\newtheorem{theorem}{Theorem}
\newtheorem*{theorem*}{Theorem}
\newtheorem{algorithm}{Algorithm}
\newtheorem{algorithmappendix}{Algorithm}[section]
\theoremstyle{definition}
\theoremstyle{remark}
\newtheorem*{example*}{Example}
\newtheorem*{algorithm*}{Algorithms}
\begin{document}


\title{Unitary Control of Multiport Wave Transmission}

\author{Cheng Guo}
\email{guocheng@stanford.edu}
\affiliation{Ginzton Laboratory and Department of Electrical Engineering, Stanford University, Stanford, California 94305, USA}

\author{David A. B. Miller}
\affiliation{Ginzton Laboratory and Department of Electrical Engineering, Stanford University, Stanford, California 94305, USA}

\author{Shanhui Fan}
\email{shanhui@stanford.edu}
\affiliation{Ginzton Laboratory and Department of Electrical Engineering, Stanford University, Stanford, California 94305, USA}%

\date{\today}

\begin{abstract}Controlling wave transmission is crucial for various applications. In this work, we apply the concept of unitary control to manipulate multiport wave transmission. Unitary control aims to control the behaviors of a set of orthogonal waves simultaneously. The approach fully harnesses the capability of wavefront shaping techniques, with promising applications in communication, imaging, and photonic integrated circuits. Here we present a detailed theory of unitary control of wave transmission, focusing on two key characteristics: total (power) transmittance and direct (field) transmission. The total transmittance for an input port represents the fraction of total transmitted power with respect to  the input power for wave incident from an input port. The direct transmission for an input port denotes the complex transmission amplitude from that input port to its corresponding output port. We address two main questions: (i) the achievable total transmittance and direct transmission for each port, and (ii) the configuration of unitary control to attain desired transmission values for each port. Our theory illustrates that unitary control enables uniform total transmittance and direct transmission across any medium. Furthermore, we show that reciprocity and energy conservation enforce direct symmetry constraints on wave transmission in both forward and backward directions under unitary control.
\end{abstract}
\maketitle


\section{introduction}\label{sec:introduction}

Controlling the transmission of waves through complex media is crucial for various applications in imaging~\cite{sebbah2001a,ntziachristos2010a,cizmar2012,kang2015,guo2018,guo2018a,yoon2020b,wang2020p,long2021,bertolotti2022,wang2022,long2022}, communications~\cite{miller2013c,miller2019,seyedinnavadeh2024}, sensing~\cite{aulbach2011,sarma2015,mounaix2016,jeong2018a,liu2020s}, photonic integrated circuits~\cite{turpin2018,leedumrongwatthanakun2020,wetzstein2020a}, and spatiotemporal wave shaping~\cite{wang2021c,guo2021c,shen2023}. Recent advancements~\cite{guan2012,katz2012,horstmeyer2015,vellekoop2015,yu2015d,pai2021b} have allowed for precise control over coherent wave transport in complex media~\cite{rotter2017,cao2022a} including highly scattering biological tissues~\cite{yaqoob2008,ntziachristos2010a,horstmeyer2015,yu2015d} and multimode optical fibers~\cite{fan2005,cizmar2011,cizmar2012,papadopoulos2012,carpenter2015,xiong2016}, enabling applications such as spatial and temporal focusing~\cite{lerosey2007,vellekoop2008b,katz2011,mccabe2011,xu2011a,papadopoulos2012,park2013a,horstmeyer2015,blochet2017,jeong2018a}, transmittance enhancement and suppression~\cite{vellekoop2008,popoff2010,aulbach2011,kim2012a,shi2012,yu2013a,gerardin2014,pena2014a,popoff2014,davy2015,shi2015a,bender2020c}, and optical micro-manipulation~\cite{cizmar2010a,gonzalez-ballestero2021,hupfl2023}.

A key development in wave transmission control has been the use of wavefront shaping techniques, particularly spatial light modulators (SLMs)~\cite{vellekoop2007,yu2017e}. SLMs can adjust the phase of reflected light to tailor a coherent input field into a customized wavefront, resulting in the desired transmitted pattern after passing through the complex medium. This process, known as coherent control~\cite{popoff2014,liew2016,mounaix2016,guo2024c}, has revolutionized our ability to control wave propagation through complex media.

However, initial works on coherent control through SLMs have been primarily limited to a single incident wave, while many applications require the simultaneous control of multiple orthogonal waves~\cite{berdague1982,luo2014,su2021}. This multiport control is becoming possible with programmable unitary photonic devices like Mach-Zehnder interferometer (MZI) meshes~\cite{reck1994,miller2013c,miller2013a,miller2013b,carolan2015,miller2015,clements2016,ribeiro2016,wilkes2016,annoni2017,miller2017d,perez2017,harris2018,pai2019} and multiplane light conversion devices~\cite{morizur2010,labroille2014,tanomura2022,kupianskyi2023,taguchi2023,zhang2023b}. These devices can perform arbitrary unitary transformations and hold promise for various applications in quantum computing~\cite{carolan2015,carolan2020,elshaari2020,wang2020aw,chi2022,madsen2022,pelucchi2022}, machine learning~\cite{shen2017,prabhu2020,zhang2021f,ashtiani2022,bandyopadhyay2022,ohno2022,chen2023,pai2023}, and optical communications~\cite{clements2016,annoni2017,burgwal2017,melati2017,choutagunta2020,buddhiraju2021}. By transforming from one to another set of orthogonal incident waves, they can achieve sophisticated multiport transmission control. We refer to such control as \emph{unitary control}~\cite{guo2023b,guo2024a,guo2024b}, as it is mathematically represented by a unitary transformation of the input wave space. Refs.~\cite{miller2013c} and~\cite{seyedinnavadeh2024} are explicit physical examples of such unitary control.

In this paper, we present a systematic theory of unitary control over multiport wave transmission. Some optimum results are already known from the singular-value decomposition SVD approach to waves and optics~\cite{miller2000,miller2012}. Such SVD optimum orthogonal channels between inputs and outputs, which can be thought of as ``communication modes"~\cite{miller2000,miller2019} or ``mode-converter basis sets"~\cite{miller2012,miller2019}, can be found and implemented physically using unitary control from MZI meshes~\cite{miller2013c,miller2019,seyedinnavadeh2024}. Beyond these SVD modes, other sets of orthogonal waves are relevant in various contexts. For instance, principal modes in multimode waveguides exhibit minimal modal dispersion and form orthogonal bases at both waveguide ends~\cite{fan2005}. Additionally, orbital angular momentum (OAM) modes are common in multimode fiber communications~\cite{yue2012a,bozinovic2013b,richardson2013a,brunet2014b,lian2022}. Understanding the transmission properties of these diverse orthogonal wave sets is thus valuable.

When analyzing the transmission properties of orthogonal waves, several key characteristics are essential for each input wave basis: total transmittance, direct transmission, and cross talk. These quantities are defined after selecting orthogonal bases at both input and output ends of the medium (e.g., a waveguide). For an incident wave in the $i$-th input port, the total transmittance $T_i$ represents the fraction of power transmitted with respect to the input power. The direct transmission $t_i$ denotes the complex transmission amplitude to the $i$-th output port. Cross talk, calculated as $T_i - |t_i|^2$, describes the fraction of power transmitted to output ports other than the $i$-th output port. Though a full transmission matrix is required to describe the transmission properties of the medium, the total transmission and the cross-talk usefully characterize the power transmission properties of the $i$-th input basis wave through the medium.

Here we address two fundamental questions for the transmission of orthogonal waves in unitary control: (i) What are the possible values of total transmittance and direct transmission for each input port under unitary control in a given medium? (ii) How can we configure unitary control to achieve specific total transmittance or direct transmission for each port? We answer these questions using matrix inequalities. Our results provide insights into the principles and implementations of the unitary control method.

The rest of this paper is organized as follows. In Sec.~\ref{sec:notations}, we summarize useful mathematical notations. In Sec.~\ref{sec:theory}, we develop a general theory of unitary control over wave transmission. In Sec.~\ref{sec:applications}, we discuss the physical applications of our theory. We conclude in Sec.~\ref{sec:conclusion}. Appendices contain additional information. Appendices~\ref{appendix:Chu_Fickus}-\ref{Appendix:algorithm3_demo} provide details and demonstrations of the algorithms. Appendix~\ref{appendix:proof} provides detailed mathematical proof.


\section{Notations}\label{sec:notations}
We first summarize the notations related to matrices. We denote by $M_{n}$ the set of $n\times n$ complex matrices and  $U(n)$ the set of $n\times n$ unitary matrices. For $M \in M_{n}$, we define $\bm{d} (M) = \left(d_{1}(M),\ldots, d_{n}({M})\right)^{T}$, $\bm{\lambda} (M) = \left(\lambda_{1}(M), \ldots, \lambda_{n}(M)\right)^{T}$, and $\bm{\sigma} (M) = \left(\sigma_{1}(M),\ldots,\sigma_{n}(M)\right)^{T}$ 
the vectors of diagonal entries, eigenvalues, and singular values of $M$~\cite{guo2023c}. We also define 
\begin{equation}
\bm{\sigma}^2(M) \equiv \left(\sigma^2_{1}(M),\ldots,\sigma^2_{n}(M)\right)^{T}.
\end{equation} 
(We take the convention of choosing the singular values to be real numbers, with any complex phase factors instead included in the corresponding singular functions.) Finally, for $\bm{z} = (z_1, \dots, z_n)^T \in \mathbb{C}^n$, we define $|\bm{z}| = (|z_1|, \dots, |z_n|)^T \in \mathbb{R}^n$. (The superscript $T$ here means the transpose, giving a more compact way of writing a column vector. This is not the same ``$T$" as used later for transmission.)

We also discuss the notations of majorization~\cite{marshall2011}. For $\bm{x} = (x_1, x_2, \ldots,x_n)\in \mathbb{R}^n$, we define 
$\bm{x}^\downarrow = (x^\downarrow_1, x^\downarrow_2, \ldots,x^\downarrow_n)$, where $x_1^\downarrow \ge x_2^\downarrow \ge  \cdots \ge x_n^\downarrow$, hence reordering the components of $x$ in non-increasing order. For $\bm{x}=(x_1, \ldots, x_n)$ and $\bm{y} = (y_1, \ldots, y_n)$ in $\mathbb{R}^n$, if
\begin{align}
&\sum_{i=1}^k x_i^\downarrow \le \sum_{i=1}^k y_i^\downarrow, \quad k=1,2,\ldots,n-1;     \\
&\sum_{i=1}^n x_i = \sum_{i=1}^n y_i, \label{eq:def_majorization_equality}
\end{align}
we say that $\bm{x}$ is \emph{majorized} by $\bm{y}$, written as $\bm{x} \prec \bm{y}$. If Eq.~(\ref{eq:def_majorization_equality}) is replaced by a corresponding inequality:
\begin{equation}
\sum_{i=1}^n x_i \leq \sum_{i=1}^n y_i,     
\end{equation}
we say that $\bm{x}$ is \emph{weakly majorized} by $\bm{y}$, written as $\bm{x} \prec_w \bm{y}$. 

\section{Theory}\label{sec:theory}

\begin{figure}[hbtp]
    \centering
    \includegraphics[width=0.5\textwidth]{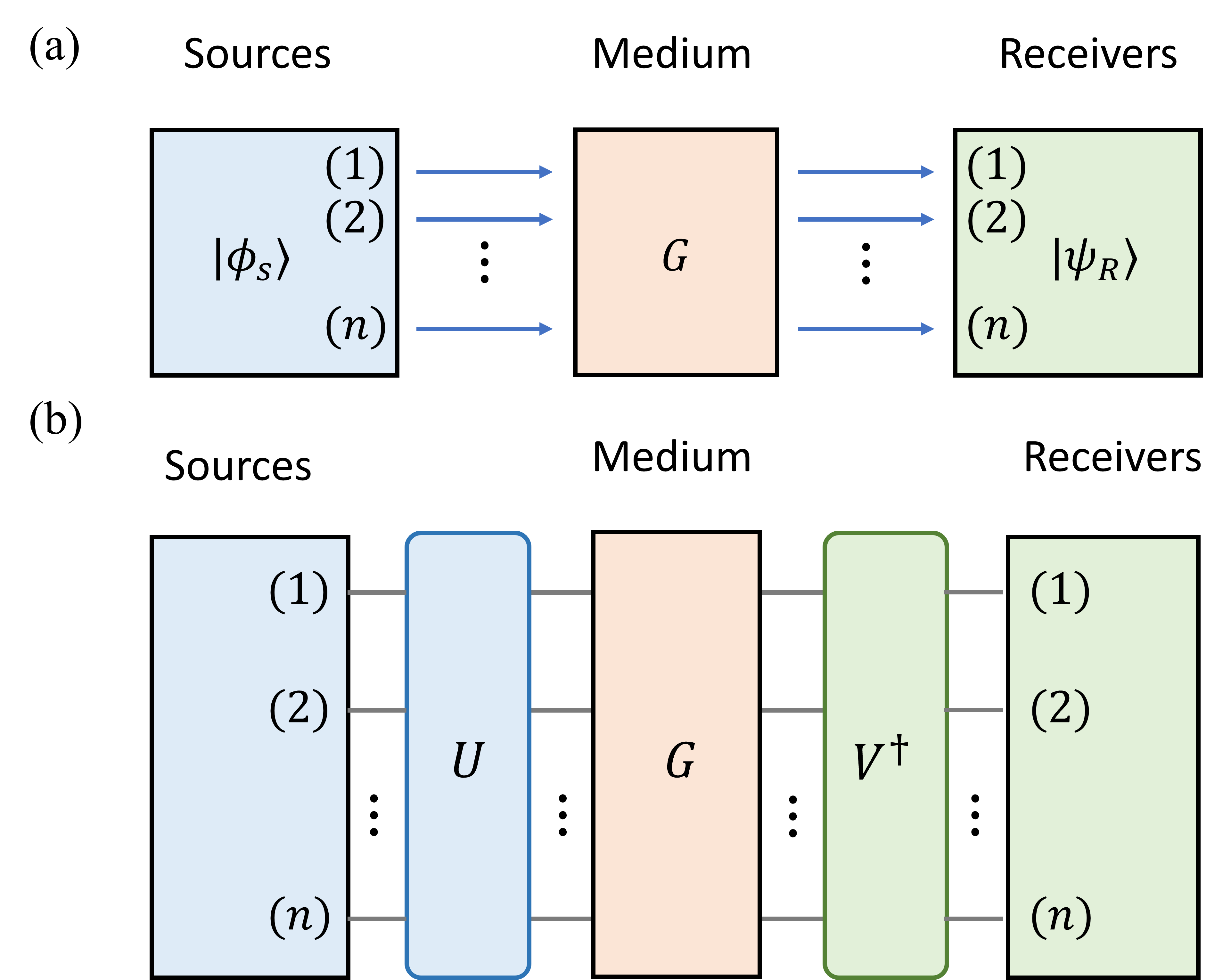}
    \caption{(a) Scheme of wave transmission. An input wave $\ket{\phi_S}$ passes through the medium and becomes the transmitted wave $\ket{\psi_R} = G \ket{\phi_S}$.  (b) Unitary control of wave transmission. Input ports and output ports are numbered from $(1)$ to $(n)$. }
    \label{fig:scheme}
\end{figure}

\subsection{Wave transmission}
Consider a typical setup for wave transmission, where waves are emitted from multiple sources, transmitted through a medium, and detected by several receivers~\cite{miller2019} (see Fig.~\ref{fig:scheme}a). The medium is assumed to be a linear time-invariant system with two sides, each having $n$ ports. We consider a set of input ports, with the wave in the $j$-th input port being written as $\ket{\phi_j^{(i)}}$ and the wave in the $j$-th output port being written as $\ket{\psi_j^{(o)}}$. The wave in one input (or output) port by definition does not overlap with the wave in any other input (or output) port, so this ``port by port" set of input functions is orthogonal, and similarly for the corresponding ``port by port" set of output functions. Using these port functions as bases, we can describe the input and transmitted waves as the vectors of amplitudes
\begin{equation}
\bm{a} = (a_{1},\ldots,a_{n})^{T}, \quad \bm{b} = (b_{1},\ldots,b_{n})^{T},    
\end{equation}
where $a_{i}$ and $b_{i}$ represent the input and transmitted wave amplitudes in their respective ports. The transmission process is described by a complex \emph{transmission matrix} $G\in M_n$ (which can also be considered as a coupling matrix, a device matrix, or a Green's function matrix):~\cite{nazarov1995,vellekoop2008,popoff2010,miller2019}
\begin{equation}
\bm{b} = G\bm{a},    
\end{equation}
where $G_{ij}$ represents the transmission coefficient from the $j$-th port on the left to the $i$-th port on the right. 

\subsection{Unitary control of wave transmission}

Now we introduce unitary control. Unitary control refers to unitarily transforming the input and transmitted wave bases  (Fig.~\ref{fig:scheme}b):
\begin{equation}\label{eq:new_basis}
\ket{\phi_j^{(i)}} \to  \ket{\phi_j^{(i)}[U]} = \sum_{k=1}^{n} U_{kj} \ket{\phi_k^{(i)}}, \quad \ket{\psi_j^{(o)}} \to \ket{\psi_j^{(o)}[V]} =  \sum_{k=1}^{n} V_{kj} \ket{\psi_k^{(o)}},
\end{equation}
where $U, V \in U(n)$. Throughout the paper, we use the square bracket to indicate the dependence on a matrix. The resulting transformed basis functions are themselves now $n$-dimensional vectors of amplitudes. Because unitary operations preserve orthogonality, these new bases are also orthogonal. Under the new bases, the transmission matrix is modified to
\begin{equation}\label{eq:unitary_control_G}
G \to G[U,V] = V^{\dagger} G U,    
\end{equation}
Hence, unitary control corresponds to a unitary equivalence~\cite{horn2012} of a transmission matrix. 

Unitary control also transforms the transmission properties of a system. We focus on two key characteristics: total (power) transmittance and direct (field) transmission. The total transmittance~\cite{popoff2014} is described by a real \emph{total transmittance vector} 
\begin{equation}
\bm{T} \coloneqq (T_{1}, \dots, T_{n})^{T},
\end{equation}
where $T_i\in \mathbb{R}$ represents the total power transmitted when a wave with unit power is incident from the $i$-th input port only. The direct transmission is described by a complex \emph{direct transmission vector}
\begin{equation}
\bm{t}\coloneqq (t_{1}, t_{2}, \dots, t_{n})^{T},   
\end{equation}
where $t_{i}\in \mathbb{C}$ represents the amplitude transmission coefficient from the $i$-th input port to the $i$-th transmitted port. These input and transmitted ports can be chosen freely, but in many practical scenarios, there is a natural choice. For example, in a multimode fiber, the orbital angular momentum (OAM) modes are often chosen as input and transmitted bases~\cite{yue2012a,bozinovic2013b,richardson2013a,brunet2014b,lian2022}. Then the direct transmission describes the transmission coefficients between the same OAM modes. Because the OAM modes are transmitted without mixing in such an ideal fiber, $\bm{T}$ just becomes the vector of power transmissions of these modes, and $\bm{t}$ just becomes the vector of the amplitude transmissions of these modes. $G$ and $G^\dagger G$ then become diagonal matrices in this basis, so that, formally, in our notation
\begin{equation}
    \bm{T} = \bm{d} (G^{\dagger}G), \qquad \bm{t} = \bm{d} (G).  
\end{equation}
Under the unitary control as defined in Eq.~(\ref{eq:unitary_control_G}), the total transmittance vector in the new bases is modified to
\begin{equation}\label{eq:def_T_U}
\bm{T} \to \bm{T}[U] \coloneqq \bm{d} (G^{\dagger}[U,V] G[U,V]) =  \bm{d} (U^{\dagger}G^{\dagger}GU),    
\end{equation}
which depends only on $U$. Similarly, the direct transmission vector in the new bases is modified to
\begin{equation}\label{eq:def_t_U_V}
\bm{t} \to \bm{t}[U,V] \coloneqq  \bm{d} (G[U,V]) = \bm{d} (V^{\dagger}GU),
\end{equation}
which depends on both $U$ and $V$.

\subsection{Major questions}

We ask four basic questions: Given a transmission medium under unitary control, (1) What total transmittance vectors are attainable? (2) How to obtain a given total transmittance vector? (3) What direct transmission vectors are attainable? (4) How to obtain a given direct transmission vector? Questions 1 and 3 ask about the capability and limitation of unitary control. Questions 2 and 4 ask for implementation.

We now reformulate these key questions mathematically. Let $G \in M_{n}$ be a given transmission matrix. Question 1: What is the set
\begin{equation}\label{eq:Question1}
\{\bm{T}\} \coloneqq \{\bm{T}[U] \mid U\in U(n)\}?    \end{equation}
Question 2: Given $\bm{T}_{0} \in \{\bm{T}\}$, what is the set
\begin{equation}\label{eq:Question2}
\{U[\bm{T_{0}}]\} \coloneqq \{U\in U(n) \mid  \bm{T}[U] = \bm{T}_{0}\}? 
\end{equation}
Question 3: What is the set
\begin{equation}\label{eq:Question3}
\{\bm{t}\} \coloneqq \{\bm{t}[U,V] \mid U,V \in U(n)\} ?    
\end{equation}
Question 4: Given $\bm{t}_{0}\in \{\bm{t}\}$, what is the set
\begin{equation}\label{eq:Question4}
\{(U,V)[\bm{t}_{0}]\} \coloneqq \{U,V \in U(n) \mid \bm{t}[U,V] = \bm{t}_{0}\}?    
\end{equation}

\subsection{Main results}

Here we provide complete answers to Questions 1-3 and a partial answer to Question 4.

\subsubsection{Answer to Question 1}

\begin{figure}[htbp]
    \centering
    \includegraphics[width=0.45\textwidth]{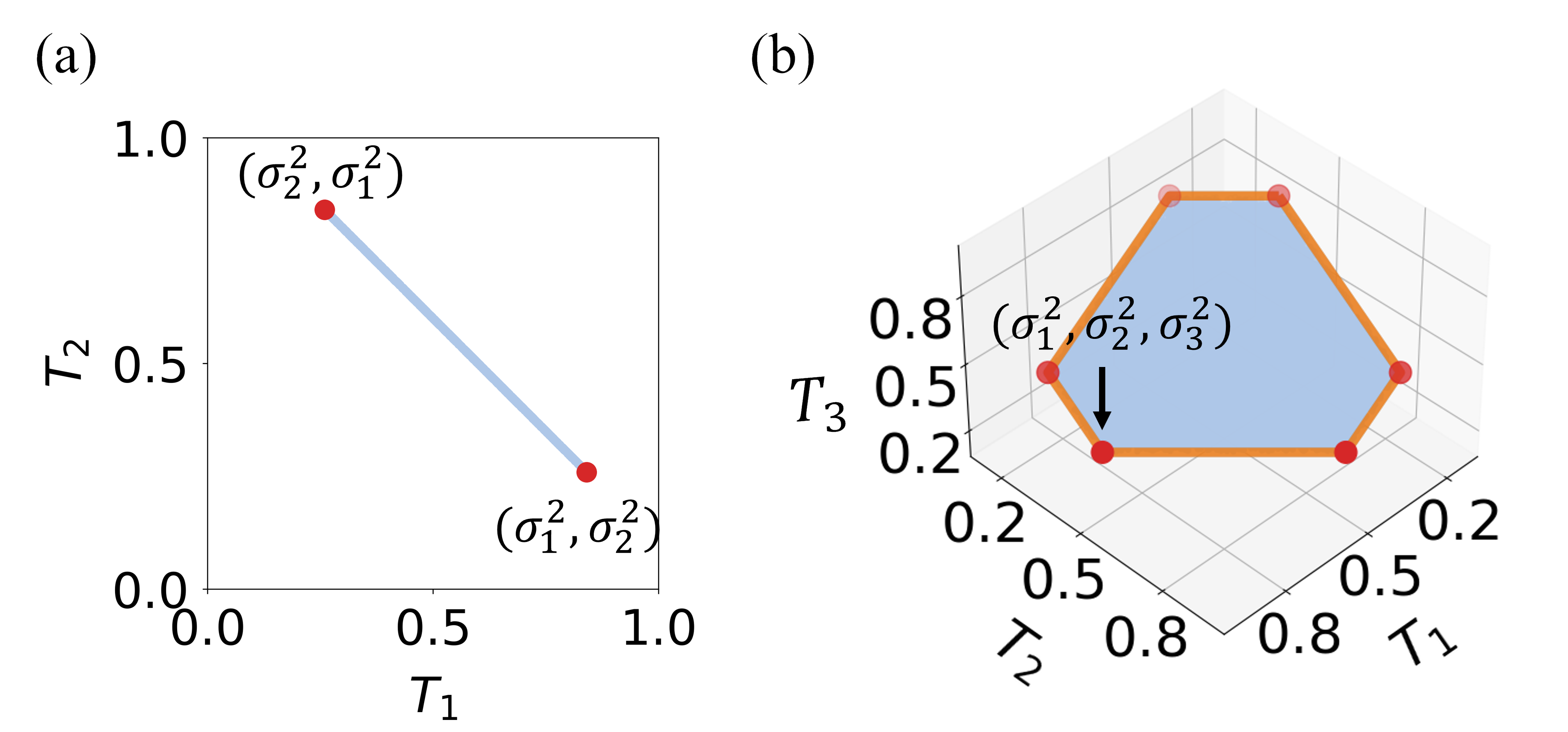}
    \caption{Examples of $\{\bm{T}\}$. (a) $G_2\in M_2$ with $\bm{\sigma}^2(G_2) = (0.84,0.26)^T$. (b) $G_3\in M_3$ with $\bm{\sigma}^2(G_3) = (0.81,0.36,0.16)^T$.} 
    \label{fig:T}
\end{figure}
We start with Question 1. To hint at the solution, we perform two numerical experiments. In the first experiment, we consider a random $2\times 2$ transmission matrix:
\begin{align}\label{eq:G_matrix_2_2_example}
&G_2 = \begin{pmatrix}
0.4-0.5i & 0.1 + 0.3 i \\
0.4 - 0.3 i & 0.5 -0.3i
\end{pmatrix}; \\
&\bm{\sigma}(G_2) = \begin{pmatrix}
0.92 \\ 
0.51
\end{pmatrix}, \quad \bm{\sigma}^2(G_2) = \begin{pmatrix}
0.84 \\
0.26
\end{pmatrix}.
\end{align}
We generate $\num{1000}$ random $U_{i}\in U(2)$ and calculate $\bm{T}[U_{i}]$ by Eq.~(\ref{eq:def_T_U}). Fig.~\ref{fig:T}(a) shows the result. We see that $\{\bm{T}\}$ is a line segment with endpoints obtained by permuting the coordinates of $\bm{\sigma}^{2}(G_2)$. In the second experiment, we consider a random $3\times 3$ scattering matrix:
\begin{align}\label{eq:G_matrix_3_3_example}
&G_3 = \begin{pmatrix}
-0.14-0.07i & -0.19 - 0.27 i & 0.55 - 0.04i \\
-0.48 - 0.26 i & -0.09 - 0.12i & -0.23 - 0.38i \\
-0.02 - 0.03i & 0.22-0.44i & -0.14 - 0.34 i
\end{pmatrix};  \\ 
&\bm{\sigma}(G_3) = \begin{pmatrix}
0.90 \\ 
0.60 \\
0.40
\end{pmatrix}, 
\quad 
\bm{\sigma}^2(G_3) = \begin{pmatrix}
0.81 \\
0.36 \\
0.16
\end{pmatrix}.
\end{align}
We generate $\num{100000}$ random $U_{i} \in U(3)$ and calculate $\bm{T}[U_{i}]$ by Eq.~(\ref{eq:def_T_U}). Fig.~\ref{fig:T}(b) shows the result. We see that $\{\bm{T}\}$ is a convex hexagon with vertices obtained by permuting the coordinates of $\bm{\sigma}^{2}(G_3)$.

The numerical results above suggest the following observation on the geometry of $\{\bm{T}\}$: For an $n\times n$ transmission matrix $G$, $\{\bm{T}\}$ is a convex subset of an $(n-1)$-dimensional hyperplane in $\mathbb{R}^n$. It is the convex hull spanned by the $n!$ points obtained by permuting the coordinates of $\bm{\sigma}^2(G)$. (The convex hull of a set is the smallest convex set that contains it.) We show that this observation is true as a result of our first theorem: 
 
\begin{theorem}\label{Theorem:T}
Given a transmission matrix $G \in M_n$,  
\begin{align}
\{\bm{T}\} = \{\bm{u} \in \mathbb{R}^n | \bm{u} \prec \bm{\sigma}^{2}(G) \}.   \label{eq:main_result_T} 
\end{align}
\end{theorem}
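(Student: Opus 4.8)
\quad The plan is to recognize Theorem~\ref{Theorem:T} as an instance of the Schur--Horn theorem. First I would set $H \coloneqq G^{\dagger}G$, which is Hermitian and positive semidefinite with eigenvalue vector $\bm{\lambda}(H) = \bm{\sigma}^{2}(G)$. Since every Hermitian matrix is unitarily diagonalizable, the map $U \mapsto U^{\dagger} H U$ sweeps out \emph{exactly} the set of Hermitian matrices whose spectrum is $\bm{\sigma}^{2}(G)$ as $U$ ranges over $U(n)$. Combined with Eq.~(\ref{eq:def_T_U}), this gives
\begin{equation}
\{\bm{T}\} = \{\, \bm{d}(W) \mid W \in M_n \text{ Hermitian},\ \bm{\lambda}(W) = \bm{\sigma}^{2}(G) \,\},
\end{equation}
so the theorem is equivalent to the classical statement that the set of diagonals of Hermitian matrices with a fixed spectrum $\bm{\lambda}$ is precisely $\{\bm{u} \mid \bm{u} \prec \bm{\lambda}\}$.

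For the inclusion ``$\subseteq$'' (Schur's direction), I would diagonalize $U^{\dagger} H U = Q \Lambda Q^{\dagger}$ with $Q \in U(n)$ and $\Lambda = \operatorname{diag}(\bm{\sigma}^{2}(G))$, so that $T_i[U] = \sum_{j} |Q_{ij}|^{2}\, \sigma_j^{2}(G)$. The matrix $S$ with entries $S_{ij} = |Q_{ij}|^{2}$ is doubly stochastic, and $\bm{T}[U] = S\,\bm{\sigma}^{2}(G)$. A vector obtained from $\bm{\sigma}^{2}(G)$ by a doubly stochastic map is majorized by it (Hardy--Littlewood--P\'olya), and the equality of sums needed for $\prec$ rather than merely $\prec_w$ holds automatically since $\sum_i T_i[U] = \operatorname{tr}(U^{\dagger}H U) = \operatorname{tr}(H) = \sum_i \sigma_i^{2}(G)$; hence $\bm{T}[U] \prec \bm{\sigma}^{2}(G)$.

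For the reverse inclusion ``$\supseteq$'' (Horn's direction), given any $\bm{u} \prec \bm{\sigma}^{2}(G)$ I would construct a Hermitian matrix $A$ with diagonal $\bm{u}$ and spectrum $\bm{\sigma}^{2}(G)$; then, writing $A = V \Lambda V^{\dagger}$ and $H = W \Lambda W^{\dagger}$, the unitary $U = W V^{\dagger}$ satisfies $U^{\dagger} H U = A$, so that $\bm{T}[U] = \bm{d}(A) = \bm{u}$. The construction of $A$ proceeds by induction on $n$: the majorization hypothesis guarantees that a single Givens rotation acting on a suitably chosen $2\times 2$ principal block can pin one diagonal entry to a prescribed component of $\bm{u}$ while leaving, on the remaining $n-1$ coordinates, a Hermitian matrix whose spectrum still majorizes the remaining $n-1$ components of $\bm{u}$; iterating realizes all of $\bm{u}$. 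This is exactly the explicit procedure detailed in Appendix~\ref{appendix:Chu_Fickus}.

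The geometric description seen in Fig.~\ref{fig:T} then follows at once: by Rado's theorem the majorization polytope $\{\bm{u} \mid \bm{u} \prec \bm{\sigma}^{2}(G)\}$ is the convex hull of the $n!$ coordinate permutations of $\bm{\sigma}^{2}(G)$, and it lies in the hyperplane $\sum_i u_i = \sum_i \sigma_i^{2}(G) = \operatorname{tr}(G^{\dagger}G)$. I expect the main obstacle to be the sufficiency (Horn) direction: the inductive step must choose \emph{which} eigenvalue and \emph{which} target component to combine so that the leftover data of size $n-1$ again satisfies a majorization relation, and verifying this is where the full chain of majorization inequalities --- not just the trace identity --- is needed.
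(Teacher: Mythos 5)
Your proposal is correct and follows essentially the same route as the paper: the necessity direction is Schur's theorem (which you unpack via the doubly stochastic matrix $S_{ij}=|Q_{ij}|^2$) and the sufficiency direction is Horn's theorem plus unitary similarity of Hermitian matrices with equal spectra, exactly as in the paper's proof. One small correction: the inductive Givens-rotation construction you sketch for Horn's direction is the classical proof of Horn's theorem, but it is not what Appendix~\ref{appendix:Chu_Fickus} describes (that appendix reviews Chu's differential-equation flow and Fickus' frame-theoretic algorithm instead).
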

\begin{proof}
    First, we show  $\bm{T}[U] \in \{\bm{T}\}$ $\implies \bm{T}[U] \prec \bm{\sigma}^{2}(G)$. We use Schur's theorem~\cite{schur1923}: 
\begin{equation}
    \bm{T}[U] = \bm{d}(U^{\dagger}G^{\dagger}GU) \prec \bm{\lambda}(U^{\dagger}G^{\dagger}GU) = \bm{\lambda}(G^{\dagger}G) = \bm{\sigma}^2(G). 
\end{equation}
Second, we show $\bm{u} \prec \bm{\sigma}^{2}(G)$ $\implies \bm{u} \in \{\bm{T}\}$, i.e.,~there exists $U \in U(n)$ such that $\bm{T}[U] = \bm{u}$. We use Horn's theorem~\cite{horn1954}: As $\bm{u}\prec \bm{\sigma}^{2}(G)$, there exists a Hermitian matrix $H$ with $\bm{d}(H) = \bm{u}$ and $\bm{\lambda}(H) = \bm{\sigma}^{2}(G)$. Since $\lambda(G^{\dagger}G) = \bm{\sigma}^{2}(G) = \lambda(H)$, $H$ and $G^{\dagger}G$ are unitarily similar.  Hence there exists $U \in U(n)$ such that $H = U^\dagger G^{\dagger}G U.$ Now we check 
\begin{equation}
    \bm{T}[U] \equiv \bm{d}(U^\dagger G^{\dagger}G U) = \bm{d}(H) = \bm{u}.
\end{equation}   
This completes the proof.
\end{proof}
The geometric observation above is a direct consequence of Theorem~\ref{Theorem:T}. We use Rado's theorem~\cite{rado1952}, which states that for a given $\bm{y} \in \mathbb{R}^n$, the set $\{\bm{x}\in\mathbb{R}^n | \bm{x} \prec \bm{y}\}$ is the convex hull of points obtained by permuting the components of $\bm{y}$. The geometric observation is an application of Rado’s theorem to Eq.~(\ref{eq:main_result_T}). 

It should be noted that the necessary condition, $\bm{T}[U] \in {\bm{T}} \implies \bm{T}[U] \prec \bm{\sigma}^{2}(G)$, is known in literature in some equivalent form~\cite{miller2000,miller2019} but not proven using majorization theory. This implies that no channels can outperform SVD channels~\cite{miller2019}. The sufficient condition, stating that for any $\bm{u} \prec \bm{\sigma}^{2}(G)$, there exists $U \in U(n)$ such that $\bm{T}[U] = \bm{u}$, is a new result. It demonstrates that any power distribution channels not superior to SVD channels in the majorization sense can be achieved through unitary control.

Eq.~(\ref{eq:main_result_T}) is our first main result. It shows that $\{\bm{T}\}$ is completely determined by $\bm{\sigma}(G)$, which is invariant under unitary control: $\bm{\sigma}(V^\dagger G U) = \bm{\sigma}(G)$. We can classify all transmission media by their $\bm{\sigma}$. Two media exhibit the same $\{\bm{T}\}$ if and only if they belong to the same $\bm{\sigma}$ class.

\subsubsection{Answer to Question 2}

Then, we turn to Question 2. The problem corresponds to the following physical scenario.  Suppose we have a medium characterized by a transmission matrix $G$. Given a total transmittance vector $\bm{T}_0\prec \bm{\sigma}^2(G)$, how do we construct the set of all possible unitary control schemes as described by unitary matrices $\{U[\bm{T}_0]\}$ that achieve $\bm{T}_0$? Alternatively, a simpler question, how to construct one unitary control scheme as described by a unitary matrix $U[\bm{T}_0]$ that achieves $\bm{T}_0$?

These two problems can be solved by the following algorithms. We first perform a preparatory step that is common in both algorithms: Suppose $G$ has $p$ distinct singular values, then $G^\dagger G$ has $p$ distinct eigenvalues $\lambda_{1}, \ldots, \lambda_{p}$, with respective multiplicities $n_{1}, \ldots, n_{p}$. Let $\Lambda = \lambda_{1} I_{n_{1}} \oplus \ldots \oplus \lambda_{p} I_{n_{p}}$. We find a $V\in U(n)$ such that $G^\dagger G = V \Lambda V^\dagger$. Now we provide the two algorithms:

\begin{algorithm}[Constructing $\{ U{[\bm{T}_{0}]}\}$]
\hfill
\begin{enumerate}
\item Use Fickus' algorithm~\cite{fickus2013} (see Appendix~\ref{appendix:Chu_Fickus} for details) to construct all Hermitian matrices $H_{i}$ with eigenvalues $\bm{\lambda}(G^\dagger G)$ and diagonal entries $\bm{T}_0$. For each $H_{i}$, find a $V_{i}\in U(n)$ such that $H_{i} = V_{i} \Lambda V_{i}^\dagger$.
    \item We claim that  $U_{i}\in U(n)$ such that $H_{i} = U_{i}^{\dagger} G^\dagger G U_{i}$ if and only if 
\begin{equation}\label{eq:claim_algorithm}
U_{i} = V (W_{1}\oplus\ldots W_{p}) V_{i}^{\dagger}    
\end{equation}
where $W_{k}\in U(n_{k})$, $k=1,\ldots,p$, are arbitrary. Denote the set of all such $U_{i}$ as $\{U_{i}\}$. \label{item:claim1}

\item We claim that $\{U[\bm{T}_{0}]\} = \bigcup_{i} \{U_{i}\}$.\label{item:claim2} (See Ref.~\cite{guo2023b}, SM for proof of the two claims.)
\end{enumerate}\label{alg:all_U} 
\end{algorithm}
\begin{algorithm}[Constructing a ${U[\bm{T}_{0}]}$]
\hfill
\begin{enumerate}
\item Use Chu's first algorithm~\cite{chu1995} (see Appendix~\ref{appendix:Chu_Fickus} for details) to construct a Hermitian matrix $H$ with eigenvalues $\bm{\lambda}(G^\dagger G)$ and diagonal elements $\bm{T}_0$. Find a $V'\in U(n)$ such that $H = V' \Lambda V^{'\dagger}$.
\item We obtain a $U[\bm{T}_0] = V V^{'\dagger}$.
\end{enumerate}\label{alg:a_U}
\end{algorithm}
Algorithms~\ref{alg:all_U} and~\ref{alg:a_U} are our second main result. We illustrate their usage with a numerical example in Appendix~\ref{Appendix:algorithm2_demo}. We consider a $5\times 5$ transmission matrix $G$. Our task is to construct a $U[\bm{T}_0]$ with a randomly assigned goal $\bm{T}_0$. We use Algorithm~\ref{alg:a_U} and complete the task. Importantly, our algorithms allow us to achieve the prescribed total transmittance in \emph{all} ports with a \emph{single} unitary matrix that performs the unitary control. 

\subsubsection{Answer to Question 3}

\begin{figure}[htbp]
    \centering
    \includegraphics[width=0.45\textwidth]{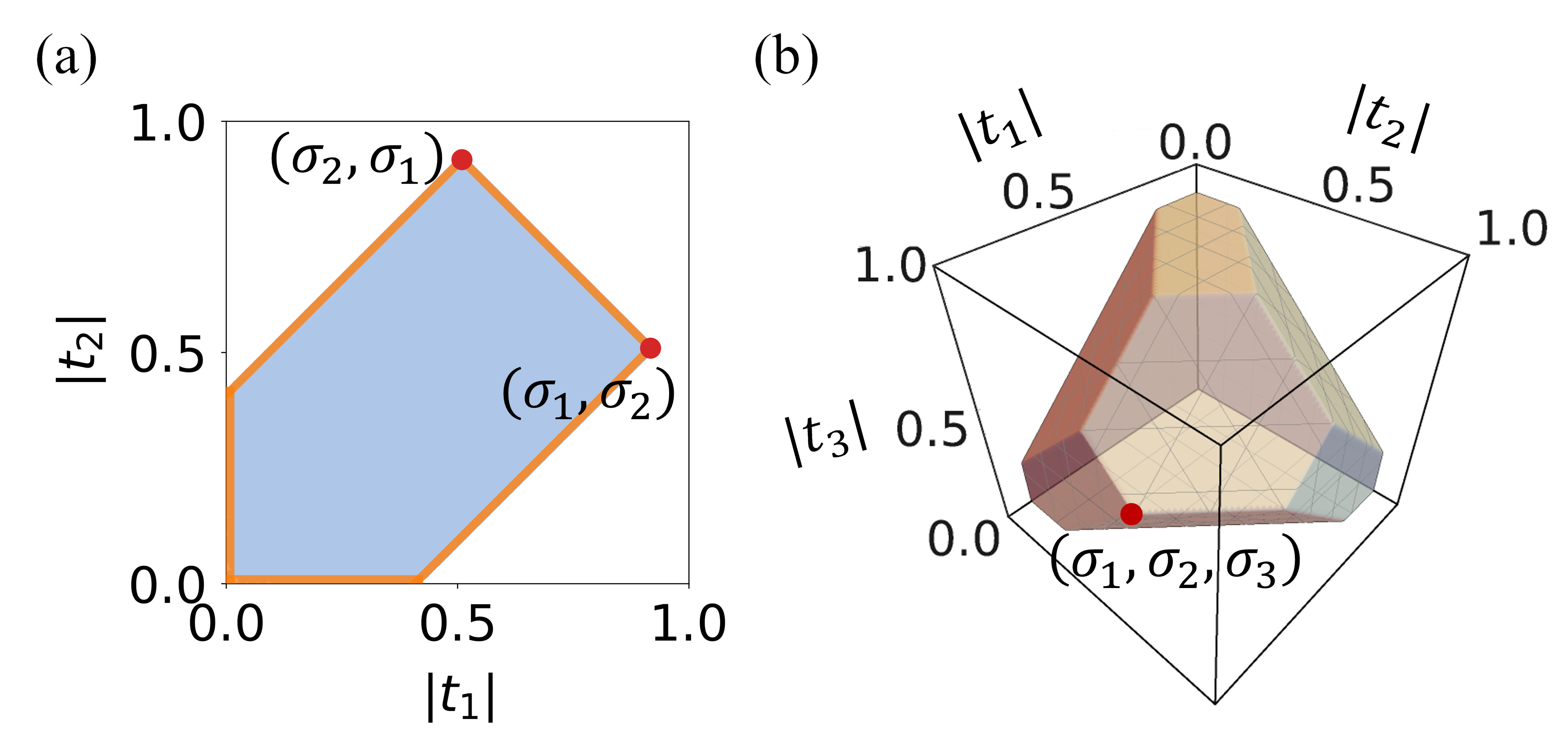}
    \caption{Examples of $\{|\bm{t}|\}$. (a) $G_2\in M_2$ with $\bm{\sigma}(G_2) = (0.92,0.51)^T$. 
    (b) $G_3\in M_3$ with $\bm{\sigma}(G_3) = (0.90,0.60,0.40)^T$.} 
    \label{fig:t_abs}
\end{figure}
Next, we consider Question 3. To hint at the solution, we perform two numerical experiments. In the first experiment, we consider the $2\times 2$ transmission matrix $G_2$ in Eq.~(\ref{eq:G_matrix_2_2_example}). We generate $\num{1000000}$ random $(U_{i},V_{i}) \in U(2)\times U(2)$ and calculate $\bm{t}[U_{i}, V_i]$ by Eq.~(\ref{eq:def_t_U_V}). Fig.~\ref{fig:t_abs}(a) shows the result of $|\bm{t}|[U_{i}, V_i]$~\footnote{It is easy to see that the phases of $\bm{t}$ components can be arbitrary when one considers all possible $(U_i, V_i)$.}. We see that $\{|\bm{t}|\}$ is a convex pentagon, which is defined by the set of solutions to a system of linear inequalities~\footnote{Inequality (\ref{ineq:redundant_2_2}) is redundant but included for later generalization.}:
\begin{align} \label{ineq:redundant_2_2}
|t|^\downarrow_1 &\leq \sigma^\downarrow_1(G_2), \\ |t|^\downarrow_1 + |t|^\downarrow_2 &\leq \sigma^\downarrow_1(G_2) + \sigma^\downarrow_2(G_2),   \\
|t|^\downarrow_1 - |t|^\downarrow_2 &\leq \sigma^\downarrow_1(G_2) - \sigma^\downarrow_2(G_2).
\end{align}
In the second experiment, we consider the $3\times 3$ transmission matrix $G_3$ in Eq.~(\ref{eq:G_matrix_3_3_example}). We generate $\num{10000000}$ random $(U_{i},V_{i}) \in U(3)\times U(3)$ and calculate $\bm{t}[U_{i}, V_i]$ by Eq.~(\ref{eq:def_t_U_V}). Fig.~\ref{fig:t_abs}(a) shows the result of $|\bm{t}|[U_{i}, V_i]$. We see that $\{|\bm{t}|\}$ is a convex decahedron, which is defined by the set of solutions to a system of linear inequalities~\footnote{Inequality (\ref{ineq:redundant_3_3}) is redundant but included for later generalization.}:
\begin{align} 
|t|^\downarrow_1 &\leq \sigma^\downarrow_1(G_3), \\ \label{ineq:redundant_3_3}
|t|^\downarrow_1 + |t|^\downarrow_2 &\leq \sigma^\downarrow_1(G_3) + \sigma^\downarrow_2(G_3),   \\
|t|^\downarrow_1 + |t|^\downarrow_2 + |t|^\downarrow_3 &\leq \sigma^\downarrow_1(G_3) + \sigma^\downarrow_2(G_3) + \sigma^\downarrow_3(G_3),   \\
|t|^\downarrow_1 + |t|^\downarrow_2 - |t|^\downarrow_3 &\leq \sigma^\downarrow_1(G_3) + \sigma^\downarrow_2(G_3) - \sigma^\downarrow_3(G_3).
\end{align}

The numerical results above suggest the following observation on the geometry of $\{|\bm{t}|\}$: For an $n\times n$ transmission matrix $G$, $\{|\bm{t}|\}$ is a convex polytope in $\mathbb{R}^n$,  bounded by an intersection of half-planes. The half-planes are defined by a set of linear inequalities involving the coordinates of $\bm{\sigma}(G)$. We show that this observation is true as a result of our second theorem:

\begin{theorem}
Given a transmission matrix $G \in M_n$,  
\begin{align}
    \{\bm{t}\} = &\{\bm{v}\in \mathbb{C}^n \mid   | \bm{v}| \prec_w \bm{\sigma}(G), \notag \\  &\sum_{i=1}^{n-1} |\bm{v}|^\downarrow_i - |\bm{v}|^\downarrow_n \leq  \sum_{i=1}^{n-1} \bm{\sigma}^\downarrow_i(G) - \bm{\sigma}^\downarrow_n(G) \}.   \label{eq:main_result_t} 
\end{align}
\end{theorem}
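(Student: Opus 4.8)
The plan is to reduce this theorem to the Sing--Thompson theorem, which characterizes exactly which vectors can occur as the diagonal of a matrix with a prescribed list of singular values. Using the SVD, write $G = \widetilde V \Sigma \widetilde U^{\dagger}$ with $\Sigma = \mathrm{diag}(\bm{\sigma}(G))$. If $M \in M_n$ has $\bm{\sigma}(M) = \bm{\sigma}(G)$, then $M = A\Sigma B^{\dagger}$ for some unitaries $A,B$, so $M = (\widetilde V A^{\dagger})^{\dagger} G (\widetilde U B^{\dagger})$ is of the form $V^{\dagger}GU$; conversely every $V^{\dagger}GU$ has the singular values of $G$. Hence $\{\bm{t}\} = \{\bm{d}(M) \mid M \in M_n,\ \bm{\sigma}(M) = \bm{\sigma}(G)\}$, and the theorem becomes the statement that a complex vector $\bm{v}$ is the diagonal of some matrix with singular values $\bm{\sigma}(G)$ if and only if $|\bm{v}|$ satisfies the two displayed families of inequalities.

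For necessity, let $\bm{v} = \bm{d}(M)$ with $\bm{\sigma}(M) = \bm{\sigma}(G)$. For the weak-majorization part, take the index set $S$ of the $k$ largest $|M_{ii}|$ and let $D$ be the diagonal matrix with entries $e^{-i\arg M_{ii}}$ on $S$ and $0$ elsewhere; then $\sum_{i\in S} |M_{ii}| = \mathrm{tr}(DM) \le \sum_{j=1}^{k} \sigma_j(M)$ by von Neumann's trace inequality, since $D$ has $k$ singular values equal to $1$ and the rest $0$, which is exactly $\sum_{j=1}^{k}|\bm{v}|^{\downarrow}_j \le \sum_{j=1}^{k}\bm{\sigma}^{\downarrow}_j(G)$. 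The remaining ``anomalous'' inequality carrying the $-|\bm{v}|^{\downarrow}_n$ term is the genuinely Sing--Thompson part; I would either cite it or reprove its necessity by induction on $n$, deleting the row and column carrying the diagonal entry of smallest modulus and controlling the drift of the singular values of the resulting $(n-1)\times(n-1)$ block via singular-value interlacing under a one-row/one-column deletion.

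For sufficiency, given $\bm{v}$ with $|\bm{v}|$ obeying the inequalities, invoke the constructive half of Sing--Thompson to build $M_0 \in M_n$ with $\bm{\sigma}(M_0) = \bm{\sigma}(G)$ and $\bm{d}(M_0) = |\bm{v}|^{\downarrow}$. A permutation similarity reorders the diagonal into that of $\bm{v}$, and conjugating by diagonal unitary (phase) matrices leaves the singular values untouched while installing the prescribed phases of $\bm{v}$; this also discharges the footnote that the phases of $\bm{t}$ are free. Feeding the resulting $M$ back through the SVD identification of the first paragraph then yields an explicit pair $(U,V)$ with $\bm{t}[U,V] = \bm{v}$.

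The main obstacle is that essentially all the content sits in the Sing--Thompson theorem: the necessity of the single inequality $\sum_{i=1}^{n-1}|\bm{v}|^{\downarrow}_i - |\bm{v}|^{\downarrow}_n \le \sum_{i=1}^{n-1}\bm{\sigma}^{\downarrow}_i(G) - \bm{\sigma}^{\downarrow}_n(G)$, and the inductive construction behind sufficiency. If that theorem is cited, the argument collapses to the SVD bookkeeping and the permutation/phase adjustments above; making it self-contained means carrying out the $n$-step induction, with singular-value interlacing under deletion of a row and column as the technical lever.
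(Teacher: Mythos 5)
Your proposal takes the same route as the paper, whose entire proof is the single sentence ``This can be proved using the Sing--Thompson theorem''; you correctly supply the missing reduction, namely that $\{\bm{t}\}$ is exactly the set of diagonals of matrices with singular values $\bm{\sigma}(G)$, after which the theorem is a verbatim citation of Sing--Thompson. One small fix: to install the phases of $\bm{v}$ you should multiply $M_0$ on one side by a diagonal unitary (which is available within the unitary equivalence $V^{\dagger}GU$ and preserves singular values), not conjugate by it --- conjugation by a diagonal unitary leaves every diagonal entry $M_{ii}$ unchanged.
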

\begin{proof}
This can be proved using the Sing-Thompson theorem~\cite{sing1976,thompson1977,guo2022a}. 
\end{proof}
Eq.~(\ref{eq:main_result_t}) is our third main result. It shows that $\{\bm{t}\}$ is completely determined by $\bm{\sigma}(G)$. Two media exhibit the same $\{\bm{t}\}$ if and only if they belong to the same $\bm{\sigma}$ class. We also note that only the magnitudes of $t_i$'s are constrained by Eq.~(\ref{eq:main_result_t}), while the phases of $t_i$'s can be arbitrarily set, e.g., by considering a diagonal unitary transformation $U$ or $V$.

\subsubsection{Answer to Question 4}

Finally, we discuss Question 4. The problem corresponds to the following physical scenario.  Suppose we have a medium characterized by a transmission matrix $G$. Given a direct transmission vector $\bm{t}_0 \in \{\bm{t}\}$, how do we construct the set of all possible unitary control schemes as described by unitary matrix pairs $\{(U,V)[\bm{t}_0]\}$ that achieve $\bm{t}_0$? Alternatively, a simpler question, how to construct one unitary control scheme as described by a unitary matrix pair $(U,V)[\bm{t}_0]$ that achieves  $\bm{t}_0$?

We don't have the answer to the first problem yet. We provide an algorithm that solves the second problem:

\begin{algorithm}[Constructing a ${(U,V)[\bm{t}_{0}]}$]
\hfill
\begin{enumerate}
\item Calculate the SVD of $G$: $G =  V_0^\dagger \Sigma U_0$. 
\item Use Chu's second algorithm~\cite{chu1999} (see Appendix~\ref{appendix:Chu_Fickus} for details) to construct a complex matrix $G'$ with singular values $\bm{\sigma}(G)$ and diagonal elements $\bm{t}_0$. Calculate the SVD of $G'$: $G' = V^{'\dagger} \Sigma U'$.
\item We obtain a $(U,V) [\bm{t}_0] = (U_0^\dagger U',V_0^\dagger V^{'})$.
\end{enumerate}\label{alg:a_UV}
\end{algorithm}
Algorithm~\ref{alg:a_UV} is our fourth main result. We illustrate its usage with a numerical example in Appendix~\ref{Appendix:algorithm3_demo}. We consider a $5\times 5$ transmission matrix $G$. Our task is to construct a $(U,V)[\bm{T}_0]$ with a randomly assigned goal $\bm{t}_0$. We use Algorithm~\ref{alg:a_UV} and complete the task. Importantly, our algorithm allows us to achieve the prescribed direct transmission in \emph{all} ports with a \emph{single} unitary matrix pair that performs the unitary control.

\section{Applications}\label{sec:applications}

Now we discuss the physical applications of our theory.

\subsection{Multimode coherent perfect/zero transmission}

First, we provide the criterion for $k$-fold degenerate coherent perfect transmittance, i.e.,~the effect that a medium exhibits perfect total transmittance for $k$ independent coherent input waves. From Eq.~(\ref{eq:main_result_T}), we obtain a necessary and sufficient condition for $G \in M_n$:
\begin{equation}
\bm{\sigma}^{\downarrow}_{1}(G)=\ldots =\bm{\sigma}^{\downarrow}_{k}(G)=1.    
\end{equation}
Similarly, we provide the criterion for $k$-fold degenerate coherent zero transmittance, i.e.,~the effect that a medium exhibits zero total transmittance for $k$ independent coherent input waves. From Eq.~(\ref{eq:main_result_T}), we obtain a necessary and sufficient condition for $G \in M_n$:
\begin{equation}
\bm{\sigma}^{\downarrow}_{1}(G)=\ldots =\bm{\sigma}^{\downarrow}_{k}(G)=0.    
\end{equation}

\subsection{Unitary uniform transmission}

Second, we propose the concepts of \emph{unitary uniform total transmittance}, and \emph{unitary uniform direct transmission}, which refer to the effects that a medium exhibits uniform total transmittance ($T_i = \text{const}$), and uniform direct transmission ($t_i = \text{const}$), respectively, under some unitary control. We claim that \emph{any} medium exhibits unitary uniform total transmittance. We prove this by showing that for any $G\in M_n$, there exist $U \in U(n)$ such that 
\begin{equation}
\bm{T}[U] =  (a, \ldots, a)^T, \quad a = \frac{1}{n}\sum_{i=1}^n \sigma_i^2(G).
\end{equation}
This is because for any $(x_1, x_2, \ldots,x_n)^T\in \mathbb{R}^n$, we have
\begin{equation}
(\bar{x},\bar{x},\ldots, \bar{x})^T \prec (x_1, x_2, \ldots,x_n)^T, \;\; \bar{x} = \frac{1}{n}\sum_{i=1}^n x_i.
\end{equation}
Thus, for any $G\in M_n$, we have
\begin{equation}
(a, \ldots, a)^T \prec \bm{\sigma}^2(G),  
\end{equation}
hence by Eq.~(\ref{eq:main_result_T}), it is attainable under unitary control. Moreover, we claim that \emph{any} medium exhibits unitary uniform direct transmission. We prove this by showing that for any $G\in M_n$, and for any $b\in \mathbb{C}$ that satisfies:
\begin{equation}
    |b| \leq \frac{1}{n}\sum_{i=1}^n \sigma_i(G), 
\end{equation}
there exist $(U,V) \in U(n)\times U(n)$ such that 
\begin{equation}
\bm{t}[U,V] =  (b, \ldots, b)^T.
\end{equation}
This is because $(b, \ldots, b)^T$ satisfies all the inequalities in Eq.~(\ref{eq:main_result_t}), hence it is attainable under unitary control. So, any transmission medium can exhibit uniform total transmittance or direct transmission over any number of ports under suitable unitary control. Moreover, our algorithms provide practical implementations to achieve these effects. These results can be useful in applications such as uniform illumination.

\subsection{Symmetry constraints on bilateral transmission}

Third, we discuss the constraints imposed by symmetry on the unitary control of bilateral transmission. So far, our focus has been on the transmission matrix $G$ in the forward direction (from left to right). However, in many applications, it is also necessary to consider the transmission matrix $\tilde{G}$ in the backward direction (from right to left). In general, $G$ and $\tilde{G}$ can be distinct. Nevertheless, certain symmetries of the system can establish a relationship between them, thereby affecting their unitary control. Here we examine two significant internal symmetries~\cite{zhao2019c,guo2022c}: reciprocity and energy conservation. We have proven that when the medium is either reciprocal or energy-conserving, 
\begin{equation}\label{eq:relation_G_G_tilde}
\bm{\sigma}(\tilde{G}) = \bm{\sigma}(G).
\end{equation}
(See Appendix~\ref{appendix:proof} for proof.) Consequently, $G$ and $\tilde{G}$ belong to the same $\bm{\sigma}$ class and exhibit the same sets of attainable $\{\bm{T}\}$ and $\{\bm{t}\}$ under unitary control.

\section{Conclusion}\label{sec:conclusion}

In conclusion, we provide a systematic theory for unitary control of wave transmission. We reveal that singular value inequalities provide the mathematical structure to describe the physics of unitary control.  Our results deepen the understanding of unitary control of wave transmission and provide practical guidelines for its implementation.

\begin{acknowledgments}
This work is funded by a Simons Investigator in Physics
grant from the Simons Foundation (Grant No.~827065) and
by a Multidisciplinary University Research Initiative
(MURI) grant from the U.S. Air Force Office of Scientific Research (AFOSR) (Grant No.~FA9550-21-1-0312).

\end{acknowledgments}

\appendix

\section{Chu's algorithms and Fickus' algorithm}\label{appendix:Chu_Fickus}

Here we briefly review Chu's and Fickus' algorithms for constructing a Hermitian matrix with prescribed diagonal entries and eigenvalues.

\begin{algorithmappendix}[Chu, 1995~\cite{chu1995}]\label{alg:Chu}
One can construct a real symmetric matrix with prescribed eigenvalues $\bm{\lambda}$ and diagonal entries $\bm{d}$ by integrating the differential equations:
\begin{equation}
\dot{X} = \left[X,\left[\operatorname{diag}(X) - \operatorname{diag}(\bm{d}), X \right]\right]    
\end{equation}
till equilibrium from a starting point $X_{0} = Q^{T}\Lambda Q$ with $Q$ a random orthogonal matrix and $\Lambda = \operatorname{diag}(\bm{\lambda})$. Here $[A, B] \equiv AB-BA$ is the Lie bracket, $\operatorname{diag}(X)$ is the diagonal matrix with the same diagonal entries of $X$, and $\operatorname{diag}(\bm{d})$ is the diagonal matrix with diagonal entries $\bm{d}$. This algorithm always converges to a valid solution.
\end{algorithmappendix}
\begin{algorithmappendix}[Fickus, 2013~\cite{fickus2013}]\label{alg:Fickus}
One can construct all Hermitian matrices with prescribed eigenvalues $\bm{\lambda}$ and diagonal entries $\bm{d}$ using finite frame theory. The explicit steps can be found in Ref.~\cite{fickus2013}.
\end{algorithmappendix}
We also review Chu's algorithm for constructing a matrix with prescribed diagonal entries and singular values. 
\begin{algorithmappendix}[Chu, 1999~\cite{chu1999}]\label{alg:Chu_2nd}
One can construct a real matrix with prescribed singular values $\bm{\sigma}$ and diagonal entries $\bm{d}$ using a recursive method. The explicit codes can be found in Ref.~\cite{chu1999}.
\end{algorithmappendix}

\section{Demonstration of Algorithm~\ref{alg:a_U}}\label{Appendix:algorithm2_demo}

To illustrate Algorithm~\ref{alg:a_U}, we consider a $5\times 5$ transmission matrix 
\begin{widetext}
\begin{equation}\label{eq:G_example}
G =  \begin{pNiceMatrix}
-0.15-0.13i & -0.09 -0.02i & -0.13 - 0.35i & -0.13-0.01i & -0.20 - 0.12i \\
0.36 -0.20i & 0.14 + 0.10i & -0.08-0.10i & -0.16 +0.24i & 0.22 -0.34 i \\
0.44 - 0.14 i & -0.11 - 0.05i & -0.07 +0.22i & -0.04 + 0.15i & -0.19 -0.19i \\
0.02 +0.00i & -0.04 + 0.19i & 0.25 + 0.12i & 0.11 - 0.02 i & 0.18 - 0.12i \\
0.04 -0.17 i & -0.45 -0.13 i & -0.25 - 0.14 i & 0.07 - 0.03 i & -0.18 - 0.07i
\end{pNiceMatrix},
\end{equation}
\end{widetext}
with
\begin{align}
 \bm{\sigma}(G) &= \begin{pNiceMatrix}
 0.90, &0.70, &0.50, &0.30, &0.10
 \end{pNiceMatrix}^{T}; \\
 \bm{\sigma}^{2}(G) &= \begin{pNiceMatrix}
 0.81, &0.49, &0.25, &0.09, &0.01
 \end{pNiceMatrix}^{T}.
\end{align}
The task is to construct a ${U[\bm{T}_{0}]}$, with the randomly assigned goal
\begin{align}
\bm{T}_{0} &= \begin{pNiceMatrix}
0.55, &0.20, &0.28, &0.15, &0.46
\end{pNiceMatrix}^{T}.
\end{align}
First, we check that
\begin{equation}
\bm{T}_{0} \prec \bm{\sigma}^{2}(G),
\end{equation}
so $\bm{T}_0$ is attainable via unitary control. We use Algorithm~\ref{alg:a_U} to obtain:
\begin{widetext}
\begin{equation}
U[\bm{T}_{0}] = \begin{pNiceMatrix}
-0.23 + 0.00i & 0.53 +0.00i & 0.23 + 0.00i &-0.31+0.00i & 0.72+0.00i \\
0.10-0.48i & -0.37+0.01i & -0.38 - 0.09 i & -0.62 + 0.26i & -0.15+0.02i \\
-0.48 - 0.44 i & 0.08 -0.04 i & -0.35 -0.34i & 0.30 -0.44i & -0.03 + 0.19 i \\
-0.09 - 0.17i & 0.35 + 0.32i & 0.36 - 0.25i & -0.40 + 0.00 i & 0.58 + 0.21 i \\
-0.32 -0.37i & -0.22 - 0.54i & 0.52 + 0.32i & -0.05 -0.05i & 0.12 - 0.17i
\end{pNiceMatrix}. \notag 
\end{equation}
\end{widetext}
We verify that 
\begin{equation}
\bm{T}[U] = \bm{d}(U^\dagger[\bm{T}_{0}] G^\dagger G U[\bm{T}_{0}]) =  \bm{T}_{0}.  
\end{equation}

\section{Demonstration of Algorithm~\ref{alg:a_UV}}\label{Appendix:algorithm3_demo}

To illustrate Algorithm~\ref{alg:a_UV}, we consider the same $G\in M_5$ as given in Eq.~(\ref{eq:G_example}). The task is to construct a $(U,V)[\bm{t}_{0}]$, with the randomly assigned goal
\begin{align}
\bm{t}_{0} &= \begin{pNiceMatrix}
0.50, &-0.20, &0.10, &0.30, &-0.40
\end{pNiceMatrix}^{T}.
\end{align}
First, we check that
\begin{equation}
|\bm{t}_{0}| \prec_w \bm{\sigma}(G), \qquad \sum_{i=1}^{n-1} |\bm{t}_{0}|^\downarrow_i - |\bm{t}_{0}|^\downarrow_n \leq  \sum_{i=1}^{n-1} \bm{\sigma}^\downarrow_i(G) - \bm{\sigma}^\downarrow_n(G).
\end{equation}
so $\bm{t}_0$ is attainable via unitary control. We use Algorithm~\ref{alg:a_UV} to obtain:
\begin{widetext}
\begin{equation}
U[\bm{t}_{0}] = \begin{pNiceMatrix}
0.27 + 0.00i & -0.41 +0.00i & 0.37 + 0.00i &-0.52+0.00i & -0.60+0.00i \\
-0.67 + 0.38i & -0.00+0.22i & -0.23 - 0.29 i & -0.43 - 0.03i & -0.07+0.17i \\
0.41 - 0.03i & -0.14 +0.08 i & -0.51 -0.34i & -0.06 -0.60i & 0.02 + 0.24 i \\
0.20 + 0.30i & 0.43 + 0.28i & 0.48 - 0.19i & -0.37 - 0.19 i & 0.41 - 0.01i \\
0.07 -0.17i & -0.33 - 0.62i & 0.30 + 0.02i & -0.05 -0.03i & 0.48 + 0.39i
\end{pNiceMatrix}. \notag 
\end{equation}
\begin{equation}
V[\bm{t}_{0}] = \begin{pNiceMatrix}
-0.19 - 0.43i & 0.25 -0.38i & -0.60 + 0.35i & 0.14+0.21i & -0.05+0.16i \\
-0.43-0.34i & -0.02-0.14i & 0.19 - 0.08 i & -0.70 + 0.07i & -0.05 - 0.37i \\
0.20 + 0.17 i & -0.13 +0.07 i & -0.15 + 0.15i & -0.39 + 0.39i & 0.72 + 0.22 i \\
0.18 - 0.23i & -0.24 + 0.61i & -0.03 + 0.51i & -0.21 + 0.10 i & -0.40 + 0.11 i \\
0.49 -0.28i & -0.26 - 0.51i & 0.41 - 0.06i & -0.02 +0.29i & -0.20 + 0.24i
\end{pNiceMatrix}. \notag 
\end{equation}
\end{widetext}
We verify that 
\begin{equation}
\bm{t}[U,V] = \bm{d}(V^\dagger[\bm{t}_{0}] G U[\bm{t}_{0}]) =  \bm{t}_{0}.  
\end{equation}

\section{Proof of Eq.~(\ref{eq:relation_G_G_tilde})}\label{appendix:proof}

To prove Eq.~(\ref{eq:relation_G_G_tilde}), we consider the whole scattering matrix of the medium:
\begin{equation}
S = \begin{pmatrix}
R_{1} & \tilde{G}  \\
G & R_{2}
\end{pmatrix} \in M_{2n}    
\end{equation}
where $R_{1}$ and $R_{2}$ are the reflection matrices on the left and right sides, respectively. 

If the system is reciprocal, then~\cite{guo2022c} 
\begin{equation}
S = S^T,    
\end{equation}
which implies that
\begin{equation}
\tilde{G} = G^T,    
\end{equation}
thus 
\begin{equation}
\bm{\sigma}(\tilde{G}) = \bm{\sigma}(G).    
\end{equation}

If the system is energy-conserving, then~\cite{guo2022c}
\begin{equation}
S^{\dagger}S = S S^{\dagger} = I_{2n}.    
\end{equation}
From $S^{\dagger}S =  I_{2n}$, we obtain
\begin{equation}
R_{1}^{\dagger}R_{1} + G^{\dagger}G= I_{n}.    
\end{equation}
From $SS^{\dagger}=  I_{2n}$, we obtain
\begin{equation}
R_{1} R_{1}^{\dagger} + \tilde{G} \tilde{G}^{\dagger}= I_{n}. 
\end{equation}
Therefore,
\begin{align}
\bm{\sigma}^2(G) = \bm{\lambda}(G^{\dagger}G) = \bm{\lambda}(I_n - R_{1}^{\dagger}R_{1}) = \bm{1}- \bm{\sigma}^2(R_{1}) ; \\ 
\bm{\sigma}^2(\tilde{G}) = \bm{\lambda}(\tilde{G}\tilde{G}^{\dagger}) = \bm{\lambda}(I_n - R_{1}R_{1}^{\dagger}) = \bm{1}- \bm{\sigma}^2(R_{1}).
\end{align}
Thus,
\begin{equation}
\bm{\sigma}(\tilde{G}) = \bm{\sigma}(G).    
\end{equation}

This completes the proof of Eq.~(\ref{eq:relation_G_G_tilde}).

\bibliography{main}

\end{document}